\title{Diversity-Multiplexing Tradeoffs in\\
MIMO Relay Channels}
\author{%
  \authorblockN{Deniz G\"{u}nd\"{u}z\authorrefmark{1}\authorrefmark{2},
    Andrea Goldsmith\authorrefmark{2} and H. Vincent Poor\authorrefmark{1},
  }\\
  \authorblockA{%
    \authorrefmark{1}Department of Electrical Engineering, Princeton University, Princeton, NJ.\\
  }
  \authorblockA{%
    \authorrefmark{2}Department of Electrical Engineering, Stanford University, Stanford, CA.\\
  }
  Email: dgunduz@princeton.edu, andrea@wsl.stanford.edu, poor@princeton.edu
  \thanks{This research was supported by the National Science Foundation under Grants ANI-03-38807 and CNS-06-25637 and the DARPA ITMANET program under grant 1105741-1-TFIND.}
}
\date{November, 2007}
\newtheorem{thm}{Theorem}[section]
\newtheorem{cor}[thm]{Corollary}
\newtheorem{prop}[thm]{Proposition}
\begin{document}
\maketitle
\thispagestyle{empty}
\pagestyle{empty}

\begin{abstract}
A multi-hop relay channel with multiple antenna terminals in a quasi-static slow fading environment is considered. For both full-duplex and half-duplex relays the fundamental diversity-multiplexing tradeoff (DMT) is analyzed. It is shown that, while decode-and-forward (DF) relaying achieves the optimal DMT in the full-duplex relay scenario, the dynamic decode-and-forward (DDF) protocol is needed to achieve the optimal DMT if the relay is constrained to half-duplex operation. For the latter case, static protocols are considered as well, and the corresponding achievable DMT performance is characterized.
\end{abstract}

\section{Introduction}

Relays are commonly used in wireless networks to improve performance, although the fundamental capacity limits of relay channels have yet to be fully characterized, even for simple systems \cite{Cover}. Rather than focus on capacity limits, we are interested in characterizing the tradeoff between rate gain through multiplexing versus the robustness gain through diversity associated with multiple-antenna relays. We will focus on a multiple antenna multi-hop system in which the source transmission can only be received by the relay terminal, as shown in Fig. \ref{f:system}. We call this the multiple-input multiple-output (MIMO) multi-hop relay channel. The links are assumed to be quasi-static, frequency non-selective Rayleigh fading, and the channel state information (CSI) is available only at the receiving end of each transmission.

We analyze this system in terms of the diversity-multiplexing tradeoff (DMT) in the high signal-to-noise ratio (SNR) regime introduced in \cite{Tse}. DMT analysis is useful in characterizing the fundamental tradeoff between the reliability and the number of degrees of freedom of a communication system. In DMT analysis, reliability is measured in terms of the diversity gain, which characterizes the rate of decay of the error probability with increasing SNR. The degrees of freedom is measured by the spatial multiplexing gain, which is the rate of increase in the transmission rate with SNR. While the DMT analysis is a tool to characterize the fundamental limits of a communication system in a fading environment, practical space-time codes that approach these theoretical limits have been designed \cite{HeshamCaire}- \cite{Murugan}.

In a channel with relays, the source's transmission is received by both the relays and the destination, and the source and the relay terminals cooperate to transmit the message to its intended destination \cite{Erkip, Laneman}. DMT analysis has been extensively applied to this general relay channel model; however, a full characterization of the DMT curve is still an open problem. In \cite{Azarian} the DMT of the half-duplex single-antenna relay channel is analyzed and a dynamic decode-and-forward (DDF) protocol is proposed. In DDF, the relay terminal listens to the source transmission until it can decode the message, and then starts transmitting the message jointly with the source terminal. The DMT of DDF is shown to dominate that of all other protocols, but for high multiplexing gains it does not meet the cut-set upper bound, which dictates the maximum possible of such gains \cite{Cover_book}. In \cite{Varanasi} DDF performance is improved slightly by using superposition coding. In \cite{Yuksel}, under the assumption of full CSI at the relay terminal, the compress-and-forward protocol is shown to achieve the optimal DMT performance. There has also been some recent interest in the DMT analysis for multi-hop relay systems; in \cite{Belfiore} and \cite{Hassibi} multiple single antenna relays operating in a distributed manner are considered. Due to the distributed nature of the relay nodes, amplify-and-forward relaying is considered, under which the achievable DMT is characterized.

\psfrag{M1}{$M_1$}
\psfrag{M2}{$M_2$}
\psfrag{M3}{$M_3$}
\psfrag{H1}{$\mathbf{H}_1$}
\psfrag{H2}{$\mathbf{H}_2$}
\psfrag{S}[][.3]{\huge{S}}
\psfrag{R}[][.3]{\huge{R}}
\psfrag{D}[][.3]{\huge{D}}
\begin{figure}
\centering
\includegraphics[width=3.5in]{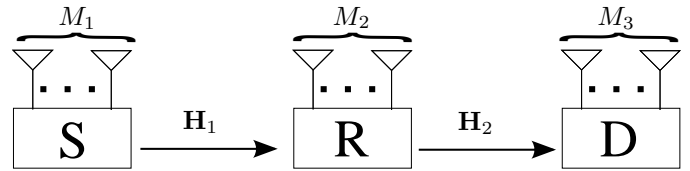}
\caption{The $(M_1, M_2, M_3)$ MIMO multi-hop relay channel model considered in the paper. There is no direct link from the source terminal (S) to the destination terminal (D).}
\label{f:system}
\end{figure}

In contrast to these prior works, we consider a MIMO multi-hop relay channel. For this model, the relay can decode the message without sacrificing degrees of freedom. While we only derive results for a single relay in this paper, our results can be extended to multiple relays. In the case of a full-duplex relay, we show that the DF protocol with a block Markov structure achieves the optimal DMT performance. In the half-duplex relay case, we first find the DMT of static protocols in which the source and the relay transmission periods are fixed, independent of the channel realization. On the other hand, we show that the DDF protocol of \cite{Azarian}, in which the time allocation depends on the realization of the source-relay channel, achieves the optimal DMT performance. In the multi-hop scenario, since the relay and the source do not transmit simultaneously, they do not need to use distributed space-time codes, which are harder to realize in practice \cite{JingHassibi, DamenHammons}. Furthermore, there is no need to inform the source or the destination terminals about the relay decision time as opposed to the general relay scenario. Hence, the dynamic relaying scheme in the case of the multi-hop relay channel can be realized by using an incremental redundancy code at the source \cite{Murugan}, and any DMT-optimal space-time code at the relay. Although the DMT of DDF has been previously shown to dominate that of other protocols in the case of general half-duplex relay channels, our results prove its optimality in the multi-hop relay scenario. In a concurrent work \cite{Gharan:IT:08}, Gharan et al. prove the optimality of the DDF protocol in a single-antenna multiple access relay network.


\section{System Model}\label{s:system}

We consider a three node multi-hop channel composed of source, relay and destination terminals with $M_1$, $M_2$ and $M_3$ antennas, respectively, as in Fig. \ref{f:system}. We call this system an $(M_1,M_2,M_3)$ multi-hop relay channel. The source-relay and the relay-destination channels are given by
\begin{eqnarray}
\mathbf{Y}_i &=& \sqrt{\frac{SNR}{M_i}} \mathbf{H}_i \mathbf{X}_i + \mathbf{W}_i,
\end{eqnarray}
for $i=1,2$, respectively, where $\mathbf{Y}_i$, $i=1,2$, are the received signals at the relay and the destination, respectively. Note that the source transmission is not received at the destination in our multi-hop relay channel model. Channels are assumed to be frequency non-selective, quasi-static Rayleigh fading and independent of each other; that is, for $i=1,2$, $\mathbf{H}_i$ is an $M_{i+1} \times M_i$ channel matrix whose entries are independent and identically distributed (i.i.d.) complex Gaussian random variables with zero means and unit variances (i.e., they are $\mathcal{CN}(0,1)$). The additive white Gaussian terms also have i.i.d. entries with $\mathcal{CN}(0,1)$. $\mathbf{X}_i$, $i=1,2$, are $M_i \times T$ source and relay input matrices, where $T$ is the total number of transmissions over which the channel is constant. We have short-term power constraints at the source and the relay given by $\mathrm{trace}(E[\mathbf{X}_i^H \mathbf{X}_i]) \leq M_i T$. For $i=1,2$, we define \[M_i^*\triangleq \min\{M_i, M_{i+1}\}.\] We assume that only the receivers have channel state information.

Following \cite{Tse}, for increasing $SNR$ we consider a family of codes and say that the system achieves a multiplexing gain of $r$ if the rate $R(SNR)$ satisfies \[\lim_{SNR \rightarrow \infty} \frac{R(SNR)}{\log(SNR)} = r.\] The diversity gain $d$ of this family is defined as \[d=-\lim_{SNR \rightarrow \infty} \frac{\log P_e(SNR)}{\log(SNR)},\] in which $P_e(SNR)$ is the error probability. For each $r$, define $d(r)$ as the supremum of the diversity gain over all families of codes. The full characterization of the DMT curve for a MIMO system is given in the following theorem \cite{Tse}.

\begin{thm}\label{t:Tse}
For a MIMO system with $M_1$ transmit and $M_2$ receive antennas and sufficiently long codewords, the optimal DMT curve $d_{M_1,M_2}(r)$ is given by the piecewise-linear function connecting the points $(k,d(k))$, $k=0,\ldots,\min(M_1,M_2)$, where $d(k)=(M_1-k)(M_2-k)$.
\end{thm}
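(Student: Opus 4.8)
The plan is to characterize $d_{M_1,M_2}(r)$ by showing that, in the high-SNR regime, the error probability of the best code is exponentially equivalent to the \emph{outage probability}, and then to compute the SNR-exponent of the latter explicitly. Write $m=\min(M_1,M_2)$ and let $\lambda_1\geq\cdots\geq\lambda_m$ be the ordered nonzero eigenvalues of $\mathbf{H}\mathbf{H}^H$. For a fixed channel realization the mutual information under i.i.d.\ Gaussian inputs is $I(\mathbf{H})=\sum_{i=1}^m\log(1+(SNR/M_1)\lambda_i)$, so at rate $R=r\log SNR$ the outage event is $\{I(\mathbf{H})<R\}$. Both the converse and the achievability are reduced to estimating $P_{out}\triangleq\Pr[I(\mathbf{H})<r\log SNR]$.

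For the converse (a lower bound on the error probability, hence an upper bound on $d$), I would first lower bound the error probability of \emph{any} code by a constant times the outage probability, since being in outage is essentially a necessary condition for reliable decoding (a Fano-type argument). The core is then the SNR-exponent of $P_{out}$. Parametrizing the eigenvalues on the SNR scale by $\lambda_i=SNR^{-\alpha_i}$ gives $I(\mathbf{H})\doteq\big(\sum_{i=1}^m(1-\alpha_i)^+\big)\log SNR$, where $x^+=\max(x,0)$. Using the joint Wishart density of the eigenvalues, a change of variables shows that the probability of the exponent configuration $\alpha=(\alpha_1,\dots,\alpha_m)$ scales as $SNR^{-\sum_{i=1}^m(2i-1+|M_1-M_2|)\alpha_i}$ over the region $\alpha_1\geq\cdots\geq\alpha_m\geq0$. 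By Laplace's principle the outage exponent equals the value of the program
\[
d_{M_1,M_2}(r)=\inf_{\alpha}\ \sum_{i=1}^m(2i-1+|M_1-M_2|)\,\alpha_i,
\]
subject to $\sum_{i=1}^m(1-\alpha_i)^+\leq r$ and $\alpha_1\geq\cdots\geq\alpha_m\geq0$. Solving this linear program---whose minimizer sets the deeply faded modes to $\alpha_i=1$ and the rest to $0$, with at most one fractional coordinate---traces out exactly the corner points $(k,(M_1-k)(M_2-k))$ and their piecewise-linear interpolation.

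For achievability I would use an i.i.d.\ Gaussian random codebook of block length $T$ and bound the average error probability by $P_{out}$ plus the probability of a decoding error conditioned on the channel \emph{not} being in outage. The plan is to show the second term decays strictly faster than $SNR^{-d_{M_1,M_2}(r)}$ via a pairwise-error-probability (union bound) estimate: conditioned on no outage, the typical pairwise distances are large enough to make the misdetection exponent dominate, provided the codewords are sufficiently long, i.e.\ $T\geq M_1+M_2-1$.

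The main obstacle is the tightness of the exponent computation in the converse: one must verify that the Laplace/saddle-point evaluation of the Wishart integral is exact to first order in the SNR-exponent---that the integral is dominated by its boundary minimizer and that polynomial prefactors do not perturb the exponent. Equally delicate is the matching achievability bound, where the conditional-error term must be shown to be genuinely negligible on the SNR scale; this is precisely where the block-length hypothesis $T\geq M_1+M_2-1$ (the ``sufficiently long codewords'' assumption) is essential.
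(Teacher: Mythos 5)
Your proposal is a correct reconstruction of exactly the argument behind this theorem, which the paper does not prove itself but quotes from Zheng and Tse \cite{Tse}: the reduction of error probability to outage probability, the eigenvalue parametrization $\lambda_i = SNR^{-\alpha_i}$ leading via the Wishart density and Laplace's method to the linear program $\inf \sum_{i=1}^m (2i-1+|M_1-M_2|)\alpha_i$ over $\{\alpha_1 \geq \cdots \geq \alpha_m \geq 0,\ \sum_{i=1}^m (1-\alpha_i)^+ \leq r\}$, whose solution traces the corner points $(k,(M_1-k)(M_2-k))$, together with random-coding achievability under the block-length condition $T \geq M_1+M_2-1$. Since this is essentially the same approach as the cited proof, nothing further is needed.
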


For the rest of the paper, we always consider codes with sufficiently long codewords so that the error event is dominated by the outage event.

\section{DMT of MIMO Multi-hop Relay channels}\label{s:DMT}

\subsection{Full-duplex Relaying}\label{s:fdDMT}

We first consider the full-duplex relay case. The next theorem shows that the DMT tradeoff of the end-to-end system is equal to the worst-case DMT tradeoff of each link along the multi-hop path. The DMT characterization given here for a single relay can be easily generalized to multiple full-duplex relays.

\begin{thm}
The DMT $d^f_{M_1,M_2,M_3}(r)$ of an $(M_1,M_2,M_3)$ full-duplex system is characterized by
\begin{eqnarray}
  d^f_{M_1,M_2,M_3}(r) &=& \min \{d_{M_1,M_2}(r), d_{M_2,M_3}(r)\}.
\end{eqnarray}
\end{thm}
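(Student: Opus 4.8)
The plan is to prove the two inequalities separately: a converse giving $d^f_{M_1,M_2,M_3}(r) \le \min\{d_{M_1,M_2}(r), d_{M_2,M_3}(r)\}$, and an achievability result establishing the matching lower bound. Since we assume codewords long enough that the error event is dominated by the outage event, it suffices throughout to work with outage probabilities and their SNR exponents.

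For the converse I would invoke the cut-set bound. Because there is no direct source-destination link, the two relevant cuts decouple into the two individual links: the cut separating the source from $\{$relay, destination$\}$ is governed only by $\mathbf{H}_1$, while the cut separating $\{$source, relay$\}$ from the destination is governed only by $\mathbf{H}_2$, as the source contributes nothing to $\mathbf{Y}_2$. Hence the end-to-end instantaneous mutual information is upper bounded by $\min\{I_1, I_2\}$, where $I_i$ denotes the mutual information of link $i$ under its optimal input distribution. The system is therefore in outage whenever either link is in outage, so its outage probability is at least the larger of the two individual link outage probabilities. Applying Theorem~\ref{t:Tse} to each link, these decay as $SNR^{-d_{M_1,M_2}(r)}$ and $SNR^{-d_{M_2,M_3}(r)}$, and a standard Fano-type argument lower bounds the error probability by the cut-set outage probability, yielding the desired upper bound on $d^f_{M_1,M_2,M_3}(r)$.

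For achievability I would use decode-and-forward with a block Markov structure over $B$ blocks under a fixed channel realization: in each block the source sends a fresh message to the relay, which re-encodes and forwards to the destination the message it decoded in the previous block, with each link employing a DMT-optimal space-time code of the type guaranteed by Theorem~\ref{t:Tse}. Transmitting $B-1$ messages over $B$ blocks incurs a rate penalty of factor $(B-1)/B$, which vanishes as $B \to \infty$, so the multiplexing gain $r$ is preserved. Because the channel is quasi-static, if the source-relay link is not in outage the relay decodes every message correctly, and if the relay-destination link is not in outage the destination does likewise; consequently the end-to-end error event is contained in the union of the two link outage events. Bounding the probability of this union by the sum of the two, the system outage probability decays no slower than $SNR^{-\min\{d_{M_1,M_2}(r),\, d_{M_2,M_3}(r)\}}$, giving the matching lower bound and completing the proof.

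I expect the main obstacle to be the achievability direction, specifically making precise that the block Markov scheme incurs no loss in the SNR exponent: one must verify that the residual error from the relay's decoding when the source-relay link is \emph{not} in outage, together with the vanishing block-rate penalty, does not degrade the diversity order, and that using per-link DMT-optimal codes attains each link's outage exponent $d_{M_i,M_{i+1}}(r)$ simultaneously. The converse, by contrast, reduces to a direct application of the cut-set bound once the absence of a direct link is exploited to separate the two cuts.
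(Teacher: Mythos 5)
Your proof is correct and takes essentially the same route as the paper: the paper's one-line proof simply invokes the fact that decode-and-forward achieves the capacity (i.e., the cut-set bound) of the full-duplex multi-hop relay channel \cite{Cover}, which is exactly the combination of your cut-set converse and your block-Markov DF achievability. Your write-up merely makes explicit the union bound over the two independent link-outage events and the vanishing $(B-1)/B$ rate penalty that the paper leaves implicit in its appeal to the capacity result.
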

\begin{proof}
The result follows easily as DF achieves the capacity for a full-duplex multi-hop relay channel \cite{Cover}.
\end{proof}

\subsection{Static Protocols for Half-duplex Relaying}\label{s:hdsDMT}

In the half-duplex relay scenario, the total $T$ time units need to be divided among the source and the relay transmissions. We first consider static protocols for which the time allocation is fixed, independent of the channel states. However, similar to the generalized decode-and-forward protocol in \cite{Deniz_relay}, we consider unequal division of the time slot among the source and the relay. The source transmits during the first $aT$ channel uses, where $0<a<1$. The relay tries to decode the message and forwards over the remaining $(1-a)T$ channel uses. We call this protocol \emph{decode-and-forward with fixed allocation} (fDF), and its DMT is given in the next proposition.

\begin{prop}
The DMT of the half-duplex $(M_1,M_2,M_3)$ relay channel with fixed time allocation $a$ $(0<a<1)$ is
\begin{small}
\begin{eqnarray}\label{e:fDF}
d^{fDF}_{M_1M_2M_3}(r) &=& \min \left\{d_{M_1,M_2}\left(\frac{r}{a}\right), d_{M_2,M_3}\left(\frac{r}{1-a}\right)\right\}
\end{eqnarray}
\end{small}
\end{prop}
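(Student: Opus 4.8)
The plan is to reduce the end-to-end performance to that of two independent MIMO links placed in series, exploiting the assumption (stated just before this section) that for sufficiently long codewords the error probability is exponentially equal to the outage probability. First I would split the overall error event into two parts: the event $O_1$ that the relay fails to decode the source message from the first $aT$ channel uses, and the event $O_2$ that the destination fails to decode given that the relay decoded correctly. Writing $\doteq$ for exponential equality in $SNR$, a union bound gives the achievability direction $P_e \lesssim \Pr[O_1] + \Pr[O_2]$, so that the end-to-end diversity will be the minimum of the two individual diversity exponents.

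The crucial bookkeeping step is the rate rescaling. If the system operates at multiplexing gain $r$, the source must convey roughly $rT\log(SNR)$ nats using only $aT$ channel uses on the first hop, i.e.\ at an instantaneous rate corresponding to multiplexing gain $r/a$ on the $M_1\times M_2$ source--relay link; similarly the relay conveys the same message over $(1-a)T$ uses of the $M_2\times M_3$ relay--destination link, at multiplexing gain $r/(1-a)$. Invoking Theorem \ref{t:Tse} on each hop with these effective multiplexing gains then yields $\Pr[O_1]\doteq SNR^{-d_{M_1,M_2}(r/a)}$ and $\Pr[O_2]\doteq SNR^{-d_{M_2,M_3}(r/(1-a))}$; here I use that the channel matrices $\mathbf{H}_1,\mathbf{H}_2$ are independent and that a DMT-optimal space--time code (guaranteed by Theorem \ref{t:Tse}) is employed on each hop. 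Substituting into the decomposition above gives the claimed $\min$.

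For the converse direction specific to this protocol, I would argue that each outage event individually forces an error. Since the relay lies on the only path from source to destination, a first-hop outage makes the relay's estimate unreliable, so $P_e \gtrsim \Pr[O_1]$; and because $\Pr[O_1]\to 0$, conditioning on correct relay decoding shows $P_e \gtrsim \Pr[O_2]$. Combining these two lower bounds with the union upper bound pins $P_e$ to the exponential order $SNR^{-\min\{d_{M_1,M_2}(r/a),\,d_{M_2,M_3}(r/(1-a))\}}$, which is exactly the statement.

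The step I expect to be the main obstacle is getting the rate normalization exactly right and translating it faithfully into Theorem \ref{t:Tse}: the entire content of the proposition is encoded in the factors $1/a$ and $1/(1-a)$, which arise because the same message payload is squeezed into a fraction of the block on each hop. A secondary technical point is justifying the passage from error to outage on each hop, which relies on the long-codeword assumption invoked just before this section and on the independence of the two links, so that the dominant (smaller-exponent) term alone determines the overall behavior.
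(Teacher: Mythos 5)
Your proposal is correct and is essentially the paper's own argument: the paper's proof consists of the single remark that the result ``follows easily from Theorem \ref{t:Tse} with simple scaling of the DMT curve due to time division,'' which is precisely the $r/a$ and $r/(1-a)$ rate-rescaling and two-hop outage decomposition you carry out in detail. Your fleshed-out version (union bound for achievability, per-hop outage lower bounds for the converse, independence of $\mathbf{H}_1,\mathbf{H}_2$) correctly supplies the steps the paper leaves implicit.
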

\begin{proof}
This result follows easily from Theorem \ref{t:Tse} with simple scaling of the DMT curve due to time division.
\end{proof}

\begin{figure}
\centering
\includegraphics[width=3.8in]{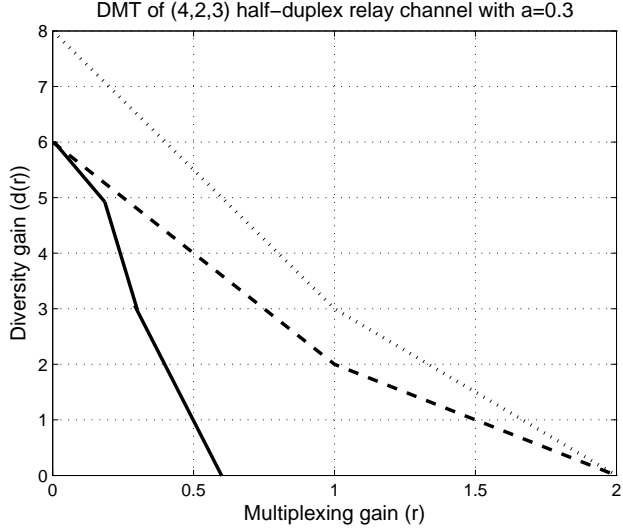}
\caption{The dotted and the dashed curves correspond to $d_{4,2}(r)$ and $d_{2,3}(r)$, respectively. Note that the dashed curve also corresponds to the DMT in the case of a full-duplex relay terminal. The solid curve is the DMT curve of a $(4,2,3)$ half-duplex multi-hop relay with the fDF protocol and $a=0.3$.}
\label{f:fDF}
\end{figure}

We can see from the above DMT that the highest multiplexing gain for the fDF scheme is $\min \{a M_1^*, (1-a)M_2^*\}$. On the other hand, the highest diversity gain is limited to $M_2\min\{M_1, M_3\}$. We illustrate the DMT of a $(4,2,3)$ system with a fixed time allocation of $a=0.3$ in Fig. \ref{f:fDF}.

Since different time allocations result in different DMT curves, we can optimize the time allocation based on the multiplexing gain \cite{Deniz_relay, EliaKumar}. We call this protocol \emph{DF with variable time allocation} (vDF). Note that this is still a static protocol since the time allocation variable is determined based only on the multiplexing gain and is independent of the channel realization. For each multiplexing gain $r$, the diversity gain is the minimum of the two diversity gains in (\ref{e:fDF}); hence the optimal time allocation variable $a(r)$ is the one that satisfies
\begin{eqnarray}\label{e:vDF}
d^{vDF}_{M_1,M_2,M_3}(r) = d_{M_1,M_2}\left(\frac{r}{a(r)}\right) = d_{M_2,M_3}\left(\frac{r}{1-a(r)}\right).
\end{eqnarray}

\begin{cor}
The number of degrees of freedom of an $(M_1,M_2,M_3)$ multi-hop relay channel with the vDF protocol is $\frac{M_1^*M_2^*}{M_1^*+M_2^*}$, while the maximal diversity gain is $M_2\min\{M_1, M_3\}$.
\end{cor}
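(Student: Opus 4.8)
The plan is to extract the two extreme points of the vDF tradeoff curve separately: its horizontal intercept (the degrees of freedom, i.e.\ the largest achievable multiplexing gain) and its value at $r=0$ (the maximal diversity gain). Both follow from the fDF characterization in (\ref{e:fDF}) together with the optimization over the time-allocation variable $a$ that defines vDF, so no machinery beyond Theorem~\ref{t:Tse} is needed.

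For the degrees of freedom, I would start from the observation recorded just after the proposition: for a fixed $a$, the fDF diversity in (\ref{e:fDF}) reaches zero at multiplexing gain $\min\{a M_1^*,(1-a)M_2^*\}$, since $d_{M_1,M_2}(r/a)=0$ precisely when $r/a=M_1^*$ and $d_{M_2,M_3}(r/(1-a))=0$ precisely when $r/(1-a)=M_2^*$. The vDF protocol maximizes this over $a\in(0,1)$. As $a$ increases, the first term $aM_1^*$ increases while the second term $(1-a)M_2^*$ decreases, so the minimum is maximized at the balance point $aM_1^*=(1-a)M_2^*$, which gives $a=M_2^*/(M_1^*+M_2^*)\in(0,1)$ and maximal multiplexing gain $M_1^*M_2^*/(M_1^*+M_2^*)$. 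I would also verify that at this $(r,a)$ both link tradeoffs vanish simultaneously, namely $r/a=M_1^*$ and $r/(1-a)=M_2^*$, so the value is consistent with the balancing equation (\ref{e:vDF}) defining $a(r)$.

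For the maximal diversity gain, I would use that $d_{M_1,M_2}(\cdot)$ and $d_{M_2,M_3}(\cdot)$ are nonincreasing, hence so is the vDF curve, so the maximum is attained at $r=0$. At $r=0$ the arguments $r/a$ and $r/(1-a)$ are both zero for every $a\in(0,1)$, so from (\ref{e:fDF}) the diversity equals $\min\{d_{M_1,M_2}(0),d_{M_2,M_3}(0)\}=\min\{M_1M_2,M_2M_3\}=M_2\min\{M_1,M_3\}$, independently of $a$; consequently the vDF maximization over $a$ returns this same value.

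The one point needing care, and the place I expect to be the only real obstacle, is that the balancing equation (\ref{e:vDF}) degenerates at $r=0$: there $d_{M_1,M_2}(0)=M_1M_2$ and $d_{M_2,M_3}(0)=M_2M_3$ are distinct constants unless $M_1=M_3$, so the two curves cannot be equalized by any interior $a$ and the crossing argument used for $r>0$ does not apply. I would handle this either by evaluating the vDF curve as the limit $r\to0^+$ along the balanced allocation $a(r)$, or, more directly, by noting that the inner $\min\{\cdot,\cdot\}$ in (\ref{e:fDF}) is independent of $a$ at $r=0$ and therefore equals the claimed $M_2\min\{M_1,M_3\}$ regardless of the allocation. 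This endpoint subtlety is conceptual rather than computational; the remainder is immediate from Theorem~\ref{t:Tse}.
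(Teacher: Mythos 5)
Your proposal is correct and follows essentially the same route as the paper, which derives the corollary from the fDF expression (\ref{e:fDF}): the zero-diversity point $\min\{aM_1^*,(1-a)M_2^*\}$ is balanced over $a$ to give $M_1^*M_2^*/(M_1^*+M_2^*)$, and the diversity at $r=0$ equals $\min\{M_1M_2,M_2M_3\}=M_2\min\{M_1,M_3\}$ independently of $a$. Your extra care about the degeneracy of the balancing equation (\ref{e:vDF}) at $r=0$ is a valid refinement of a point the paper leaves implicit, but it does not change the argument.
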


We now present the DMT for some special cases because a general closed form expression is not tractable. We first consider the $(M_1,1,M_3)$ system. Since the hops for this setup are multiple-input single-output (MISO) and single-input multiple-output (SIMO) systems, the DMTs are characterized as $d_{M_i,M_{i+1}}=M_i^*(1-r)$, $i=1,2$. From (\ref{e:vDF}) and defining $A\triangleq M_1^*/M_3^*$ and $B \triangleq 1-r-A(1+r)$, we find \[a(r) = \frac{-B+\sqrt{B^2-4A(A-1)r}}{2(A-1)}\]  for $A\neq 1$. We have $a(r)=0.5$ if $A=1$. The DMT achieved by the vDF protocol in a $(4,1,3)$ system is plotted in Fig. \ref{f:vDF1}. In this figure, we also plot the DMT for the fDF scheme with a fixed time allocation $a=0.5$.

If we have $M_1=M_3=M$, then the optimal time allocation is $a=0.5$ independent of the multiplexing gain, and the DMT is given by $d^{vDF}_{M,M_2,M}(r) = d_{M,M_2}(2r)$.

\begin{figure}
\centering
\includegraphics[width=3.8in]{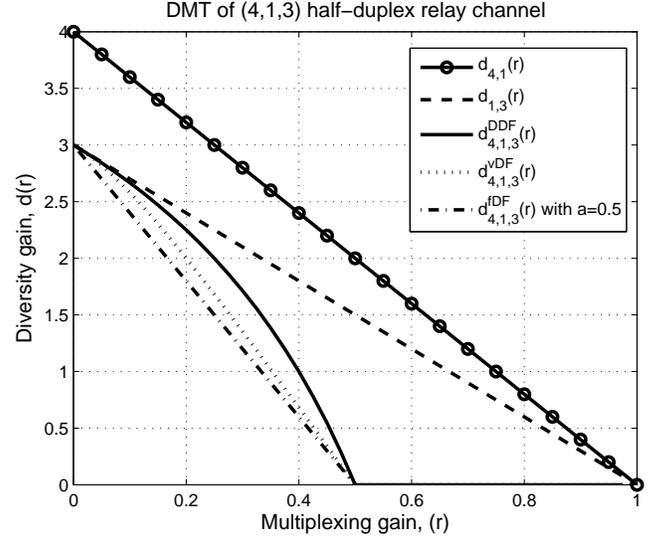}
\caption{The DMT curve of a $(4,1,3)$ multi-hop relay channel. The two topmost curves correspond to the cut-set bounds, where the dashed curve is also the DMT for a full-duplex relay. The DDF, vDF and fDF protocol with $a=0.5$ are also illustrated, where the DDF curve is the optimal DMT with half-duplex relaying.}
\label{f:vDF1}
\end{figure}

\subsection{Dynamic Decode-and-Forward Protocol for Half-duplex Relaying}\label{s:hddDMT}

In \cite{Azarian}, Azarian et al. proposed the dynamic decode-and-forward protocol for the cooperative relay channel with single antennas. In DDF for the relay channel, the source transmits during the entire timeslot using an incremental redundancy type codebook. This code design enables the relay to decode the message after receiving only a portion of the codeword; hence the relay decodes the message when the accumulated mutual information over the source-relay channel is sufficient for the transmission rate. Thus, the relay decoding time becomes a random variable that depends on the source-relay channel quality. As soon as the relay decodes the message, it starts transmitting.

The achievable DMT of the DDF scheme in the case of a single-antenna cooperative relay channel is characterized in \cite{Azarian}, where it is shown to dominate the DMTs of amplify-and-forward (AF) and decode-and-forward (DF) based protocols and, more strikingly, to achieve the DMT upper bound for multiplexing gains $r<0.5$. Hence, DDF is DMT-optimal in this range of low multiplexing gains for the single antenna cooperative relay channel.

Here, we consider using the DDF protocol for the multi-antenna multi-hop relay channel, and show that it achieves the DMT cut-set upper bound; that is, DDF is DMT-optimal. The intuitive explanation behind the optimality of DDF in this setting is as follows: In the multi-hop relay scenario, the message needs to be decoded at the relay terminal, since otherwise the destination would not be able to decode it either, due to the data processing inequality. However, any fixed time allocation scheme either wastes multiplexing gain, since it cannot utilize the good states of the source-relay channel, or results in outage in the case of a poor quality (low $\mathrm{SNR}$) source-relay channel. DDF, by enforcing decoding at the relay and dynamically allocating the source transmission time based on the source-relay channel state, achieves the optimal DMT performance.

\begin{thm}\label{t:DDF_ach}
For the $(M_1,M_2,M_3)$ system with rate $R=r\log SNR$, the outage probability of the DDF is given by
\[P_{out}(r) \doteq SNR^{-d^{DDF}(r)}\]
where
\begin{equation}\label{DDF_opt}
d^{DDF}(r) = \inf_{(\boldsymbol{\alpha_1}, \boldsymbol{\alpha_2}) \in \mathcal{\tilde{O}}_2} \sum_{i=1}^2 \sum_{j=1}^{M_i^*}(2j-1+|M_i-M_{i+1}|)\alpha_{i,j}
\end{equation}
and
\begin{eqnarray}
\mathcal{\tilde{O}}_2 &\triangleq& \left\{ (\boldsymbol{\alpha_1}, \boldsymbol{\alpha_2}) \in \mathcal{R}^{M_1^*+} \times \mathcal{R}^{M_2^*+} \bigg| \nonumber \right. \\
&& \left. \alpha_{i,1}\geq\cdots\geq\alpha_{i,M_i^*}\geq0 , r > \frac{S_1(\boldsymbol{\alpha_1}) S_2(\boldsymbol{\alpha_2})} {S_1(\boldsymbol{\alpha_1})+S_2(\boldsymbol{\alpha_2})}  \right\}  \nonumber
\end{eqnarray}
in which we have defined
\begin{equation}
S_i(\boldsymbol{\alpha_i}) \triangleq \sum_{j=1}^{M_i^*} (1-\alpha_{i,j})^+, \mbox{ for $i=1,2$}.
\end{equation}
\end{thm}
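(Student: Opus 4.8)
The plan is to follow the outage-exponent method of Zheng and Tse \cite{Tse}. I would (i) reparametrize both channels by the $SNR$-exponents of their singular values, (ii) recall the asymptotic joint eigenvalue density, (iii) translate the DDF decoding condition into an event on these exponents, and (iv) read off the outage exponent by Laplace's principle. Because the paper already assumes codewords long enough that the error event is dominated by the outage event, the exponent of $P_{out}$ is exactly the achievable diversity $d^{DDF}(r)$.

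First I would fix the eigenvalue parametrization. For $i=1,2$ let $\lambda_{i,1}\geq\cdots\geq\lambda_{i,M_i^*}$ be the nonzero eigenvalues of $\mathbf{H}_i\mathbf{H}_i^H$ and set $\lambda_{i,j}=SNR^{-\alpha_{i,j}}$. By \cite{Tse}, on the ordered region $\alpha_{i,1}\geq\cdots\geq\alpha_{i,M_i^*}\geq0$ the joint density behaves as
\[p(\boldsymbol{\alpha_i})\doteq SNR^{-\sum_{j=1}^{M_i^*}(2j-1+|M_i-M_{i+1}|)\alpha_{i,j}},\]
while configurations with some $\alpha_{i,j}<0$ carry doubly-exponentially small probability. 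This is already the source of the objective function and of the constraints $\alpha_{i,1}\geq\cdots\geq\alpha_{i,M_i^*}\geq0$ in $d^{DDF}(r)$ and $\mathcal{\tilde{O}}_2$. The per-symbol mutual information of link $i$ is $I_i=\sum_{j}\log(1+\frac{SNR}{M_i}\lambda_{i,j})\doteq S_i(\boldsymbol{\alpha_i})\log SNR$, directly from the definition of $S_i$.

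Next I would carry out the DDF timing analysis. Since each channel is constant over the block, the mutual information accumulated at the relay through channel use $t$ equals $tI_1$, so the relay decodes the rate-$R=r\log SNR$ message at the block fraction $f=R/I_1\doteq r/S_1$, as long as $f\leq1$. The relay then has $(1-f)T$ channel uses to reach the destination, which succeeds iff $(1-f)I_2\geq R$, i.e. $(1-r/S_1)S_2\geq r$. Rearranging yields the no-outage condition $r\leq S_1 S_2/(S_1+S_2)$, so the outage event is exactly the defining inequality $r>S_1(\boldsymbol{\alpha_1})S_2(\boldsymbol{\alpha_2})/(S_1(\boldsymbol{\alpha_1})+S_2(\boldsymbol{\alpha_2}))$ of $\mathcal{\tilde{O}}_2$. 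The degenerate cases are absorbed automatically: if $r>S_1$ (relay never decodes) or $S_1=0$, then $S_1S_2/(S_1+S_2)\leq\min(S_1,S_2)<r$, so such realizations already lie in $\mathcal{\tilde{O}}_2$.

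Finally, using independence of $\mathbf{H}_1$ and $\mathbf{H}_2$ to factor the density,
\[P_{out}\doteq\int_{\mathcal{\tilde{O}}_2} p(\boldsymbol{\alpha_1})\,p(\boldsymbol{\alpha_2})\,d\boldsymbol{\alpha_1}\,d\boldsymbol{\alpha_2},\]
and Laplace's principle gives the exponent $\inf_{\mathcal{\tilde{O}}_2}\sum_{i=1}^2\sum_{j=1}^{M_i^*}(2j-1+|M_i-M_{i+1}|)\alpha_{i,j}$, which is precisely \eqref{DDF_opt}. I expect the main obstacle to be the rigorous justification of the decode-time step: one must show that with an incremental-redundancy codebook the relay's decoding error is dominated by the event $tI_1<R$, so that the decode fraction concentrates at $f\doteq r/S_1$ to first exponential order, and that the subsequent $(1-f)T$ uses act as an independent block governed by $(1-f)I_2<R$. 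The density asymptotics and the Laplace evaluation are then routine, the only remaining care being to confine the infimum to the ordered nonnegative orthant, which the vanishing probability of negative exponents justifies.
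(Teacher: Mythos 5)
Your proposal is correct and follows essentially the same route as the paper's proof: the same eigenvalue-exponent parametrization and joint density asymptotics from Zheng--Tse, the same timing argument giving the harmonic-mean outage condition $r > S_1S_2/(S_1+S_2)$, and the same observation that the relay-fails-to-decode event ($r \geq S_1$) is absorbed into $\mathcal{\tilde{O}}_2$ since $S_1 \geq S_1S_2/(S_1+S_2)$, followed by Laplace's method. The only cosmetic difference is that the paper first writes the outage probability as a sum over two explicit sets $\mathcal{O}_1$ and $\mathcal{O}_2$ and then proves the inclusion $\mathcal{\tilde{O}}_1 \subseteq \mathcal{\tilde{O}}_2$, whereas you fold that inclusion directly into the definition of the outage event.
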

\begin{proof}
The proof can be found in Appendix \ref{a:DDF_ach}.
\end{proof}

The optimality of the above DMT achieved by the DDF protocol is shown in the following theorem.

\begin{thm}\label{t:DDF_ub}
DDF is DMT-optimal for MIMO multi-hop half-duplex relay channels.
\end{thm}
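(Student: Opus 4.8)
The plan is to prove DMT-optimality by establishing a matching converse: since Theorem~\ref{t:DDF_ach} already gives the achievable DMT of DDF, it suffices to show that no half-duplex relaying scheme can do better than the DMT expression in~(\ref{DDF_opt}). The natural upper bound in this multi-hop setting is the cut-set bound. Because there is no direct source-destination link, every bit that reaches the destination must first be decoded (or at least flow through) the relay, so the data-processing inequality forces the end-to-end mutual information to be bounded by \emph{both} cuts: the source-relay cut and the relay-destination cut. First I would set up the cut-set bound in the form $R \le \min\{I_1, I_2\}$, where $I_1$ is the mutual information accumulated across $\mathbf{H}_1$ while the source is transmitting and $I_2$ is that across $\mathbf{H}_2$ while the relay is transmitting. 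The key feature of half-duplex operation is that the relay listens for some fraction $f$ of the block and transmits for the remaining $(1-f)$; crucially, in the upper bound we may optimistically let this split $f$ depend on the channel realization, which is exactly the freedom DDF exploits.

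Next I would introduce the usual exponential parametrization of the eigenvalues of $\mathbf{H}_i^H\mathbf{H}_i$ via $\alpha_{i,j}$, so that the mutual information of each link in the high-SNR regime behaves like $S_i(\boldsymbol{\alpha_i})\log SNR$ with $S_i$ as defined in the statement, and the probability of a given fading level $\boldsymbol\alpha_i$ is $\doteq SNR^{-\sum_j(2j-1+|M_i-M_{i+1}|)\alpha_{i,j}}$ by the standard Laplace-principle computation from \cite{Tse}. The heart of the converse is then to show that, for a fixed fading realization $(\boldsymbol{\alpha_1},\boldsymbol{\alpha_2})$, the \emph{best possible} time split makes outage unavoidable precisely when $r$ exceeds the harmonic-type quantity $\frac{S_1 S_2}{S_1+S_2}$. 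Concretely, the source delivers $f S_1 \log SNR$ bits across the first cut and the relay delivers $(1-f) S_2 \log SNR$ bits across the second; to avoid outage one needs $f S_1 \ge r$ and $(1-f) S_2 \ge r$ simultaneously, and a feasible $f\in[0,1]$ exists if and only if $r \le \frac{S_1 S_2}{S_1+S_2}$. Hence the outage region in the fading-exponent space is exactly $\mathcal{\tilde O}_2$, and the SNR-exponent of the outage probability is obtained by Laplace's method as the infimum of $\sum_i\sum_j(2j-1+|M_i-M_{i+1}|)\alpha_{i,j}$ over that region, which is identically the achievable exponent $d^{DDF}(r)$.

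The final step is to invoke Fano's inequality to translate this outage-probability exponent into a genuine lower bound on the error probability of \emph{any} half-duplex scheme, so that the achievable exponent and the converse exponent coincide and DDF is DMT-optimal. I expect the main obstacle to be rigorously justifying that the cut-set bound permits a \emph{channel-dependent} time allocation $f$ while still yielding a valid upper bound on the reliable rate: one must argue that even granting the relay a genie that reveals $\mathbf{H}_1$ instantaneously and lets it optimize $f$, the outage set cannot be smaller than $\mathcal{\tilde O}_2$. A secondary subtlety is handling the half-duplex constraint inside the cut-set formula correctly -- the two cuts are active over disjoint, complementary fractions of the block, and one must verify that no clever superposition of listening and transmitting beats the clean time-division split in the high-SNR exponent. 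Once the genie-aided cut-set bound is shown to have outage exponent exactly~(\ref{DDF_opt}), optimality follows immediately by comparison with Theorem~\ref{t:DDF_ach}.
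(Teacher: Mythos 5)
Your proposal is correct and follows essentially the same route as the paper's proof: both bound the capacity by the cut-set bound with a channel-realization-dependent time split (your feasibility condition $fS_1(\boldsymbol{\alpha_1}) \geq r$ and $(1-f)S_2(\boldsymbol{\alpha_2}) \geq r$ admitting a solution iff $r \leq \frac{S_1 S_2}{S_1+S_2}$ is algebraically identical to the paper's optimal allocation $a(\mathbf{H}_1,\mathbf{H}_2) = \frac{\bar{C}_2}{\bar{C}_1+\bar{C}_2}$), then parametrize the eigenvalues to identify the outage region with $\mathcal{\tilde{O}}_2$ and apply Laplace's method to show the converse exponent coincides with the achievability exponent of Theorem~\ref{t:DDF_ach}. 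The one subtlety you flag as an obstacle---that the half-duplex switching time could itself convey information, potentially invalidating the time-division cut-set form---is dispatched in the paper not by a genie construction but by a citation to \cite{Yuksel}, where it is shown that randomized time allocation does not improve the DMT.
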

\begin{proof}
The proof can be found in Appendix \ref{a:DDF_ub}.
\end{proof}

\begin{cor}\label{c:div_mux_DDF}
The number of degrees of freedom of an $(M_1,M_2,M_3)$ multi-hop relay channel is $\frac{M_1^*M_2^*}{M_1^*+M_2^*}$, while the maximal diversity gain is $M_2\min\{M_1, M_3\}$. Hence, the end-points of the DMT curve can also be achieved by static relaying, i.e., with fixed time allocation corresponding to the multiplexing gain.
\end{cor}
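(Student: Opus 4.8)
The plan is to lean on Theorem~\ref{t:DDF_ub}, which identifies the optimal half-duplex DMT with the DDF expression $d^{DDF}(r)$ of Theorem~\ref{t:DDF_ach}. The task then reduces to reading off the two endpoints of that explicit optimization and comparing them with the values already obtained for the static vDF protocol. For the degrees of freedom I would locate the multiplexing gain at which the curve hits zero. Since every coefficient $2j-1+|M_i-M_{i+1}|$ is strictly positive, the objective in (\ref{DDF_opt}) vanishes if and only if $\boldsymbol{\alpha_1}=\boldsymbol{\alpha_2}=\mathbf{0}$, and this point lies in $\mathcal{\tilde{O}}_2$ exactly when $r>\frac{S_1(\mathbf{0})S_2(\mathbf{0})}{S_1(\mathbf{0})+S_2(\mathbf{0})}=\frac{M_1^*M_2^*}{M_1^*+M_2^*}$, using $S_i(\mathbf{0})=M_i^*$. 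Hence $d^{DDF}(r)=0$ for $r>\frac{M_1^*M_2^*}{M_1^*+M_2^*}$, whereas for smaller $r$ the constraint $\frac{S_1S_2}{S_1+S_2}<r$ keeps every feasible $\boldsymbol{\alpha}$ bounded away from the origin and forces a strictly positive infimum; this pins the maximal multiplexing gain at $\frac{M_1^*M_2^*}{M_1^*+M_2^*}$.

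For the maximal diversity I would evaluate $d^{DDF}(r)$ as $r\to 0^+$. The elementary bound $\frac{S_1S_2}{S_1+S_2}\ge \frac12\min\{S_1,S_2\}$ shows that feasibility forces $\min\{S_1(\boldsymbol{\alpha_1}),S_2(\boldsymbol{\alpha_2})\}<2r\to 0$, so along any feasible sequence at least one link, say link~$i$, has $S_i\to 0$, i.e. all of its $\alpha_{i,j}\to 1$. Setting $\alpha_{i,j}=1$ and zeroing the other link gives the cost $\sum_{j=1}^{M_i^*}(2j-1+|M_i-M_{i+1}|)=(M_i^*)^2+M_i^*|M_i-M_{i+1}|=M_iM_{i+1}$, and this is the cheapest way to drive $S_i$ to zero. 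Taking the better of the two links yields $d^{DDF}(0)=\min\{M_1M_2,\,M_2M_3\}=M_2\min\{M_1,M_3\}$.

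It then only remains to match these endpoints against the static schemes. The corollary established earlier for the vDF protocol shows that this fixed, channel-independent scheme already attains degrees of freedom $\frac{M_1^*M_2^*}{M_1^*+M_2^*}$ and maximal diversity $M_2\min\{M_1,M_3\}$; since these coincide with the optimal endpoints just computed, the extreme points of the DMT curve are achievable with a time allocation depending only on the multiplexing gain, which is exactly the final claim.

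I expect the degrees-of-freedom endpoint to be the substantive part rather than the diversity one. The maximal diversity is in fact sandwiched cleanly---the cut-set (full-duplex) bound built from Theorem~\ref{t:Tse} gives $\min\{M_1M_2,M_2M_3\}$ at $r=0$ and vDF already meets it---so the explicit computation there is mostly a consistency check. The degrees of freedom, however, is genuinely smaller than the cut-set value $\min\{M_1^*,M_2^*\}$, so that bound is loose and one must argue directly from the structure of $\mathcal{\tilde{O}}_2$; the careful step is verifying that for $r<\frac{M_1^*M_2^*}{M_1^*+M_2^*}$ the infimum in (\ref{DDF_opt}) is strictly positive, i.e. that the feasible set stays bounded away from $\boldsymbol{\alpha}=\mathbf{0}$, which rests on the continuity and monotonicity of $\frac{S_1S_2}{S_1+S_2}$ in the $\alpha_{i,j}$.
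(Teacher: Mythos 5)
Your proposal is correct and follows exactly the route the paper intends: the paper states this corollary without proof as an immediate consequence of Theorems \ref{t:DDF_ach} and \ref{t:DDF_ub} together with the earlier vDF corollary, and your computations fill in precisely the omitted details --- the origin $\boldsymbol{\alpha}=\mathbf{0}$ is feasible in $\mathcal{\tilde{O}}_2$ iff $r>\frac{M_1^*M_2^*}{M_1^*+M_2^*}$ (with the feasible set bounded away from the origin below that threshold), and the cheapest way to satisfy the constraint as $r\to 0$ is to drive one $S_i$ to zero at cost $(M_i^*)^2+M_i^*|M_i-M_{i+1}|=M_iM_{i+1}$, giving $M_2\min\{M_1,M_3\}$. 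Your closing observation is also accurate: the diversity endpoint is pinned by the full-duplex cut-set bound already met by vDF, whereas the degrees-of-freedom endpoint genuinely requires the half-duplex characterization, since $\frac{M_1^*M_2^*}{M_1^*+M_2^*}<\min\{M_1^*,M_2^*\}$.
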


It can be seen from Theorem \ref{t:DDF_ach} that the DMT of a half-duplex multi-hop relay channel is not a piecewise-linear function as in the case of a point-to-point MIMO channel. While it is hard to give a general closed form expression for the DMT of MIMO multi-hop channels, for given $M_1, M_2$ and $M_3$ and a fixed multiplexing gain $r$, the optimization problem in (\ref{DDF_opt}) can be converted into a convex optimization problem, and hence can be solved efficiently \cite{Boyd}. We now give an explicit characterization of the DMT for some classes of multi-hop relay channels.

\begin{figure}
\centering
\includegraphics[width=3.8in]{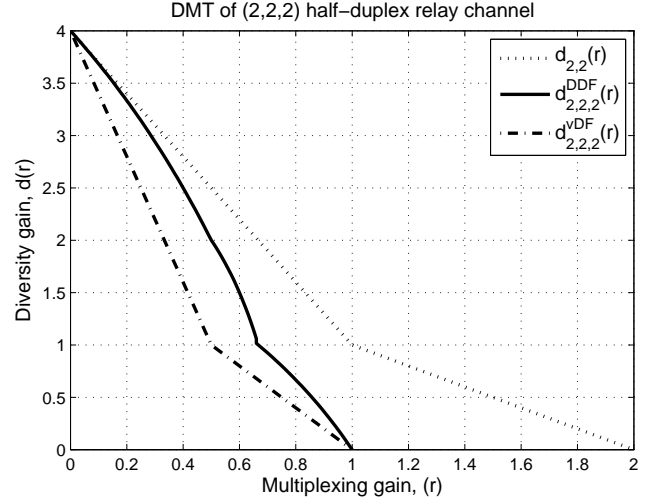}
\caption{The DMT of a $(2,2,2)$ system. From top to bottom, the three curves correspond to the full-duplex relay DMT, the half-duplex relay DMT which is achievable by DDF protocol, and the DMT of the static protocol with $a=0.5$.}
\label{f:DDF2}
\end{figure}

\begin{cor}
The DMT of an $(M_1,1,M_3)$ system is \[d^{DDF}_{M_1,1,M_3}(r) = \min(M_1,M_3)\frac{1-2r}{1-r} \] for $0 \leq r \leq 1/2$, and $0$ elsewhere.
\end{cor}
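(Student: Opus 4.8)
The plan is to specialize the variational characterization of Theorem~\ref{t:DDF_ach} to the case $M_1^*=M_2^*=1$. First I would note that $M_1^*=\min\{M_1,1\}=1$ and $M_2^*=\min\{1,M_3\}=1$, so the vectors $\boldsymbol{\alpha_1},\boldsymbol{\alpha_2}$ collapse to single scalars $\alpha_1,\alpha_2\ge 0$. Evaluating the coefficient $2j-1+|M_i-M_{i+1}|$ at $j=1$ gives $1+|M_1-1|=M_1$ for the first hop and $1+|1-M_3|=M_3$ for the second, so the objective in (\ref{DDF_opt}) reduces to $M_1\alpha_1+M_3\alpha_2$, while $S_1=(1-\alpha_1)^+$ and $S_2=(1-\alpha_2)^+$. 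Since the infimum is taken over the closure of $\mathcal{\tilde{O}}_2$, the relevant constraint is $S_1S_2/(S_1+S_2)\le r$.

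Next I would reduce this to a clean two-variable problem. The objective $M_1\alpha_1+M_3\alpha_2$ is increasing in each $\alpha_i$, whereas the constraint only becomes easier to satisfy as the $\alpha_i$ grow (larger $\alpha_i$ means smaller $S_i$ and hence smaller $S_1S_2/(S_1+S_2)$); therefore the minimizer never needs $\alpha_i>1$, and for $r<1/2$ it lies on the active constraint. Substituting $s_i=1-\alpha_i\in[0,1]$, the problem becomes: maximize $M_1s_1+M_3s_2$ subject to $s_1s_2/(s_1+s_2)\le r$. For $r\ge 1/2$ the point $s_1=s_2=1$ (i.e. $\alpha_1=\alpha_2=0$) is feasible and yields objective value $0$, so $d^{DDF}=0$, which also matches the formula $\min(M_1,M_3)\frac{1-2r}{1-r}$ at $r=1/2$.

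For $0\le r<1/2$ I would rewrite the active constraint $s_1s_2/(s_1+s_2)=r$ as the hyperbola $(s_1-r)(s_2-r)=r^2$, which meets the box $[0,1]^2$ exactly at the two points $(1,\tfrac{r}{1-r})$ and $(\tfrac{r}{1-r},1)$. Parametrising the feasible arc by $s_2=r+r^2/(s_1-r)$ and differentiating $M_1s_1+M_3s_2$ twice in $s_1$ shows this objective is strictly convex along the arc, so its maximum over the arc is attained at an endpoint rather than at the interior stationary point. Evaluating the original objective $M_1(1-s_1)+M_3(1-s_2)$ at the two endpoints gives $M_3\frac{1-2r}{1-r}$ and $M_1\frac{1-2r}{1-r}$ (using $1-\tfrac{r}{1-r}=\frac{1-2r}{1-r}$); taking the smaller value yields $d^{DDF}_{M_1,1,M_3}(r)=\min(M_1,M_3)\frac{1-2r}{1-r}$, as claimed.

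The one genuinely delicate point, which I expect to be the main obstacle, is precisely this convexity/endpoint argument. Because $s_1s_2/(s_1+s_2)$ is concave, its sublevel set is non-convex, and the interior Lagrange stationary point is in fact a \emph{minimum} of $M_1s_1+M_3s_2$ along the boundary, not a maximum; a naive first-order condition would return the wrong (too large) diversity value. Checking the sign of the second derivative is what certifies that the optimum sits at a corner, i.e. that under DDF it is best to drive one link's deficiency exponent all the way to zero and load the entire mutual-information constraint onto the link with the smaller coefficient $\min(M_1,M_3)$.
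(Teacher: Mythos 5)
Your proposal is correct and follows exactly the route the paper intends: the corollary is stated without explicit proof as a specialization of the optimization in (\ref{DDF_opt}) to $M_1^*=M_2^*=1$, which is precisely what you carry out, and your coefficient computations ($M_1\alpha_1+M_3\alpha_2$), the hyperbola reformulation $(s_1-r)(s_2-r)=r^2$, and the endpoint values all check out. Your observation that the sublevel set of $s_1s_2/(s_1+s_2)$ is non-convex, so that the interior stationary point along the arc is a maximizer of the diversity objective and the true infimum sits at a corner, is a genuine and correctly handled subtlety that the paper's terse statement glosses over.
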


In Fig. \ref{f:vDF1} we illustrate the DMT of the $(4,1,3)$ multi-hop relay channel, which is achieved by the DDF protocol. We see that the DDF DMT dominates that of the static protocols at all multiplexing gains except the end-points. As stated in Corollary \ref{c:div_mux_DDF} these end-points can be achieved by the static fixed time allocation fDF protocol as well.

\begin{cor}
The DMT of the $(2,2,2)$ system with a half-duplex relay is given by
\begin{eqnarray}
d^{DDF}(r) = \left\{
\begin{array}{lll}
                \frac{2(4-5r)}{2-r} & \mbox{if} & 0 \leq \textit{b} < 1/2 \vspace{.02in} \\
                \frac{3-4r}{1-r} & \mbox{if} & 1/2 \leq b < 2/3 \vspace{.02in} \\
                \frac{4(1-r)}{2-r} & \mbox{if} & 2/3 \leq b \leq 1
           \end{array} \right.
\end{eqnarray}
\end{cor}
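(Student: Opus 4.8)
The plan is to specialize the variational characterization (\ref{DDF_opt}) of Theorem \ref{t:DDF_ach} to $M_1=M_2=M_3=2$ and then solve the resulting two-variable optimization in closed form. For this choice $M_1^*=M_2^*=2$ and $|M_i-M_{i+1}|=0$, so the weights $2j-1$ reduce to $1$ and $3$ and the objective is $\sum_{i=1}^{2}(\alpha_{i,1}+3\alpha_{i,2})$ under the per-link ordering $\alpha_{i,1}\geq\alpha_{i,2}\geq0$. First I would minimize link-by-link: the objective and the ordering constraints are separable across $i$, and the two links interact only through $S_1$ and $S_2$, so each inner minimization can be replaced by $g(s)\triangleq\min\{\alpha_1+3\alpha_2:\alpha_1\geq\alpha_2\geq0,\ (1-\alpha_1)^{+}+(1-\alpha_2)^{+}=s\}$. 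A short exchange argument---reduce $S_i$ by inflating the cheaper coordinate $\alpha_1$ first at unit cost, and only afterwards $\alpha_2$ at cost $3$---gives $g(s)=2-s$ for $1\leq s\leq2$ and $g(s)=4-3s$ for $0\leq s\leq1$, which is exactly the point-to-point curve $d_{2,2}(s)$ of Theorem \ref{t:Tse}. The problem then collapses to
\[
d^{DDF}(r)=\min\Big\{d_{2,2}(s_1)+d_{2,2}(s_2):\ \tfrac{s_1 s_2}{s_1+s_2}=r,\ s_1,s_2\in[0,2]\Big\},
\]
where equality replaces the strict inequality because $g$ is decreasing, so the infimum over the outage set is attained on its boundary.

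Next I would linearize the coupling via $x_i=1/s_i$, turning the constraint into the line $x_1+x_2=1/r$ with the box $x_i\geq1/2$ and the objective into $\tilde g(x_1)+\tilde g(x_2)$, where $\tilde g(x)=2-1/x$ for $x\leq1$ and $\tilde g(x)=4-3/x$ for $x\geq1$. On each piece $\tilde g$ is concave, and at $x=1$ it has a single upward kink (the slope jumps from $1$ to $3$); hence the restriction of the objective to the feasible segment is concave between breakpoints. Its minimum over the segment can therefore occur only at a segment endpoint $x_i=1/2$ (i.e.\ $s_i=2$) or at an interior kink $x_i=1$ (i.e.\ $s_i=1$). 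Using the symmetry $s_1\leftrightarrow s_2$, this leaves only two candidate configurations to test: the \emph{boundary} solution $s_1=2,\ s_2=\frac{2r}{2-r}$ and the \emph{kink} solution $s_1=1,\ s_2=\frac{r}{1-r}$.

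I would then evaluate $d_{2,2}(s_1)+d_{2,2}(s_2)$ at each candidate, taking care to select the branch of $d_{2,2}$ dictated by whether the companion coordinate lies above or below the kink at $1$. The boundary candidate returns $\frac{2(4-5r)}{2-r}$ while $\frac{2r}{2-r}<1$ (that is, $r<2/3$) and $\frac{4(1-r)}{2-r}$ once $\frac{2r}{2-r}\geq1$; the kink candidate is feasible only while $\frac{r}{1-r}\leq2$, i.e.\ $r\leq2/3$, and returns $\frac{3-4r}{1-r}$ on the relevant branch. The three pieces of the claimed formula then correspond to the sub-ranges of $r$ on which a given configuration is simultaneously feasible and minimal, with the breakpoints $r=1/2$ and $r=2/3$ marking, respectively, the switch of the active configuration and the crossing of the boundary coordinate $\frac{2r}{2-r}$ through the kink of $d_{2,2}$.

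The heart of the argument---and the step I expect to be most delicate---is this final comparison: showing for each sub-range which candidate actually attains the global minimum. Concretely, I would form the difference of the two rational candidate values and observe that it factors with sign changes at $r=1/2$ and $r=2/3$, check feasibility of each candidate on its range, and verify continuity of $d^{DDF}(r)$ at the two breakpoints. I expect the bookkeeping around the kink of $d_{2,2}$---keeping the correct linear branch for the companion coordinate in each sub-case---to be the main source of error, whereas the piecewise-concavity observation is the conceptual step that collapses the continuum of feasible points to the two finite candidates above.
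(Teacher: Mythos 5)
Your reduction machinery is sound, and since the paper states this corollary with no proof at all, specializing (\ref{DDF_opt}) as you do is the natural (and essentially only) route: for $M_1=M_2=M_3=2$ the per-link minimum-cost function is indeed $g(s)=d_{2,2}(s)$, the infimum is attained on the curve $\frac{s_1s_2}{s_1+s_2}=r$, and your concavity argument correctly reduces the search to the two candidates $(s_1,s_2)=\bigl(2,\frac{2r}{2-r}\bigr)$ and $(s_1,s_2)=\bigl(1,\frac{r}{1-r}\bigr)$, whose values you also compute correctly. The gap is the final comparison, which you describe but never actually perform --- and it comes out the opposite way from what you assert. For $1/2<r<2/3$ the difference of the two candidate values is
\[
\frac{2(4-5r)}{2-r}-\frac{3-4r}{1-r}=\frac{6r^2-7r+2}{(2-r)(1-r)}=\frac{(2r-1)(3r-2)}{(2-r)(1-r)}<0,
\]
so the boundary candidate is \emph{strictly smaller} than the kink candidate throughout $(1/2,2/3)$; the kink configuration is never the minimizer. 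Concretely, at $r=0.6$ take $\boldsymbol{\alpha_1}=(0,0)$ and $\boldsymbol{\alpha_2}=(1,\tfrac{1}{7}+\epsilon)$: this point lies in $\mathcal{\tilde{O}}_2$ (it gives $S_1=2$, $S_2=\tfrac{6}{7}-\epsilon$, hence $\frac{S_1S_2}{S_1+S_2}<0.6$) and has cost $\tfrac{10}{7}+3\epsilon$, so the infimum in (\ref{DDF_opt}) is at most $\tfrac{10}{7}\approx 1.43$, strictly below the claimed $\frac{3-4r}{1-r}=1.5$.

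Carried out honestly, your plan therefore does not prove the statement; it proves $d^{DDF}(r)=\frac{2(4-5r)}{2-r}$ on all of $[0,2/3]$ and $\frac{4(1-r)}{2-r}$ on $[2/3,1]$, i.e., the breakpoint at $r=1/2$ is spurious and the middle branch of the corollary is inconsistent with Theorem \ref{t:DDF_ach} (and with the matching cut-set bound of Theorem \ref{t:DDF_ub}, whose outage region $\mathcal{\bar{O}}'$ equals $\mathcal{\tilde{O}}_2$). You implicitly assumed the sign of the comparison would favor the claimed formula on $[1/2,2/3)$ --- exactly the step you yourself flagged as delicate --- instead of checking it. The correct conclusion of your argument is that either the printed middle piece of the corollary is in error, or some constraint absent from $\mathcal{\tilde{O}}_2$ would have to exclude the configuration $\boldsymbol{\alpha_1}=(0,0)$; no such constraint exists in the theorem as stated.
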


The DMT of the $(2,2,2)$ system is plotted in Fig. \ref{f:DDF2}. The topmost curve in the figure is the DMT of a $2 \times 2$ MIMO system, which can be achieved by a full-duplex relay. The lowest curve is the DMT of the vDF protocol. Note that for this symmetric scenario vDF reduces to fDF with $a=0.5$.

\section{Conclusions}\label{s:conc}

We have derived the diversity-multiplexing tradeoff of MIMO multi-hop relay channels for both full-duplex and half-duplex relays. For full-duplex relays, it is easy to show that the decode-and-forward protocol achieves the optimal DMT, which is simply the minimum of the DMTs of the two links. This applies to multiple relays as well; that is, the DMT of the end-to-end system will be limited by link with the smallest DMT. In the case of a half-duplex relay, we have shown that the dynamic decode-and-forward protocol, in which the relay listens until decoding and then forwards, achieves the optimal DMT, which is no longer a piecewise-linear function of the multiplexing gain. We have also shown that this optimal DMT performance cannot be achieved by static time allocation. Finally, we have provided explicit expressions for the DMT of some classes of half-duplex multi-hop relay systems, and compared the achievable performances with fixed and dynamic time allocation.

\appendices
\section{Proof of Theorem \ref{t:DDF_ach}}\label{a:DDF_ach}

For the achievability scheme, we assume that the inputs at both the source and the relay are Gaussian with identity covariance matrices. Let the transmission rate be $R=r\log \mathrm{SNR}$, and define
\begin{equation}
 C_i(\mathbf{H}_i) \triangleq \log \det \left(\mathbf{I} + \frac{\mathrm{SNR}}{M_i} \mathbf{H}_i\mathbf{H}_i^\dag \right).
\end{equation}
The relay listens for $aT$ channel uses until it decodes the message. Hence, we have \[a = \frac{r \log \mathrm{SNR}}{C_1}.\]

If $a \geq 1$ then the relay is in outage, which leads to an outage for the whole system. If $a<1$, then the relay transmits during the rest of the timeslot for $(1-a)T$ channel uses. Conditioned on successful decoding at the relay, i.e., $a<1$, the outage probability over the second hop is given by
\begin{align}
&P\{r\log \mathrm{SNR} > (1-a) C_2(\mathbf{H}_2)\} \nonumber\\
& = P\left\{r\log \mathrm{SNR} > \left(1-\frac{r \log \mathrm{SNR}}{C_1(\mathbf{H}_1)}\right)  C_2(\mathbf{H}_2)\right\} \nonumber \\
& = P\left\{r\log \mathrm{SNR} > \frac{C_1(\mathbf{H}_1) C_2(\mathbf{H}_2)}{C_1(\mathbf{H}_1)+C_2(\mathbf{H}_2)}. \label{e:DDFsb}\right\}
\end{align}

Let $\lambda_{i,1}, \ldots, \lambda_{1,M_i^*}$ be the nonzero eigenvalues of $\mathbf{H}_i\mathbf{H}_i^\dag$ for $i=1,2$. Suppose $\lambda_{i,j} =\mathrm{ SNR}^{-\alpha_{i,j}}$ for $j=1,\ldots, M_i^*$, $i=1,2$. We have\footnote{Define the exponential equality as $f(SNR) \dot{=} SNR^c$, if $\lim_{SNR \rightarrow \infty} \frac{\log f(SNR)}{\log SNR} = c$. The exponential inequalities $\dot{\leq}$ and $\dot{\geq}$ are defined similarly.}

\begin{eqnarray}\label{e:DDFout1}
C_i(\mathbf{H}_i) &= & \log \prod_{j=1}^{M_i^*} \left(1+\frac{\mathrm{SNR}}{M_i} \lambda_{i,j} \right) \nonumber \\
 &\doteq & \log \prod_{j=1}^{M_i^*} \mathrm{SNR}^{(1-\alpha_{i,j})^+} \label{e:DDFout2}
\end{eqnarray}
where $(x)^+ \triangleq \max\{0,x\}$. Using these exponential equalities, we can rewrite (\ref{e:DDFsb}) as follows
\begin{align}
& P\{r\log \mathrm{SNR} > (1-a) C_2(\mathbf{H}_2) \} \nonumber\\
&= P\left\{r\log \mathrm{SNR} > \frac{C_1(\mathbf{H}_1) C_2(\mathbf{H}_2)}{C_1(\mathbf{H}_1)+C_2(\mathbf{H}_2)} \right\} \nonumber \\
& \doteq P\left\{\log \mathrm{SNR}^r > \frac{\log \mathrm{SNR}^{S_1(\boldsymbol{\alpha_1})} \log \mathrm{SNR}^{S_2(\boldsymbol{\alpha_2})}}{\log \mathrm{SNR}^{S_1(\boldsymbol{\alpha_1})} + \log \mathrm{SNR}^{S_2(\boldsymbol{\alpha_2})}} \right\} \nonumber \\
&= P\left\{r > \frac{S_1(\boldsymbol{\alpha_1}) S_2(\boldsymbol{\alpha_2})} {S_1(\boldsymbol{\alpha_1}) + S_2(\boldsymbol{\alpha_2})} \right\} \nonumber
\end{align}
where we have $S_i(\boldsymbol{\alpha_i}) = \sum_{j=1}^{M_i^*} (1-\alpha_{i,j})^+$.

Then the overall outage probability can be written as
\begin{eqnarray}
P_{out}(r) &\doteq& P\left\{ r \geq S_1(\boldsymbol{\alpha_1}) \right\} \nonumber\\
&& + P\left\{ S_1(\boldsymbol{\alpha_1}) > r > \frac{S_1(\boldsymbol{\alpha_1}) S_2(\boldsymbol{\alpha_2})} {S_1(\boldsymbol{\alpha_1})+S_2(\boldsymbol{\alpha_2})}  \right\} \nonumber
\end{eqnarray}

We define
\begin{eqnarray}
\mathcal{O}_1 &\triangleq& \{ (\boldsymbol{\alpha_1}, \boldsymbol{\alpha_2}) : r \geq S_1(\boldsymbol{\alpha_1}) \} \nonumber \\
\mathcal{O}_2 &\triangleq& \left\{ (\boldsymbol{\alpha_1}, \boldsymbol{\alpha_2}) :  S_1(\boldsymbol{\alpha_1}) > r > \frac{S_1(\boldsymbol{\alpha_1}) S_2(\boldsymbol{\alpha_2})} {S_1(\boldsymbol{\alpha_1})+S_2(\boldsymbol{\alpha_2})}  \right\} \nonumber \\
\mathcal{O} &\triangleq& \mathcal{O}_1 \cup \mathcal{O}_2 \nonumber
\end{eqnarray}

Then using the joint probability of the eigenvalues of $\mathbf{H}_i\mathbf{H}_i^\dag$ given in \cite{Tse}, the outage probability can be computed as
\begin{eqnarray}
P_{out}(r) &\doteq& \int_{\mathcal{O}} p(\boldsymbol{\alpha_1}, \boldsymbol{\alpha_2}) d\boldsymbol{\alpha_1} d\boldsymbol{\alpha_2} \nonumber \\
&\doteq& \int_{\mathcal{O}'} \prod_{i=1}^2 \prod_{j=1}^{M_i^*} SNR^{-(2j-1+|M_i-M_{i+1}|)\alpha_{i,j}} d\boldsymbol{\alpha_1} d\boldsymbol{\alpha_2} \nonumber
\end{eqnarray}
where $\mathcal{O}' \triangleq \mathcal{O} \cap (\mathcal{R}^{M_1^*+}, \mathcal{R}^{M_2^*+})$.

Using Laplace's method as in \cite{Tse}, we obtain the exponential behavior of the outage probability as $P_{out}(r) \doteq SNR^{-d^{DDF}(r)}$, where
\begin{equation}
d^{DDF}(r) = \inf_{(\boldsymbol{\alpha_1}, \boldsymbol{\alpha_2}) \in\mathcal{O}'} f(\boldsymbol{\alpha_1}, \boldsymbol{\alpha_2})
\end{equation}
and
\begin{equation}
f(\boldsymbol{\alpha_1}, \boldsymbol{\alpha_2}) \triangleq \sum_{i=1}^2 \sum_{j=1}^{M_i^*}(2j-1+|M_i-M_{i+1}|)\alpha_{i,j}.
\end{equation}

Next, we define
\begin{eqnarray}
\mathcal{\tilde{O}}_1 &\triangleq& \left\{(\boldsymbol{\alpha_1}, \boldsymbol{\alpha_2}) \in \mathcal{R}^{M_1^*+} \times \mathcal{R}^{M_2^*+}| \nonumber \right. \\
&& \left. \alpha_{i,1}\geq\cdots\geq\alpha_{i,M_i^*}\geq0 , r \geq S_1(\boldsymbol{\alpha_1}) \right\} \nonumber \\
\mathcal{\tilde{O}}_2 &\triangleq& \left\{ (\boldsymbol{\alpha_1}, \boldsymbol{\alpha_2}) \in \mathcal{R}^{M_1^*+} \times \mathcal{R}^{M_2^*+}|  \nonumber \right. \\
&& \left. \alpha_{i,1}\geq\cdots\geq\alpha_{i,M_i^*}\geq0 , r > \frac{S_1(\boldsymbol{\alpha_1}) S_2(\boldsymbol{\alpha_2})} {S_1(\boldsymbol{\alpha_1})+S_2(\boldsymbol{\alpha_2})}  \right\} \nonumber \\
\mathcal{\tilde{O}} &\triangleq& \mathcal{\tilde{O}}_1 \cup \mathcal{\tilde{O}}_2 \nonumber
\end{eqnarray}
We can see that $\mathcal{O}' = \mathcal{\tilde{O}}$. Hence,
\begin{align}
&d^{DDF}(r) = \inf_{(\boldsymbol{\alpha_1}, \boldsymbol{\alpha_2}) \in \mathcal{\tilde{O}}} f(\boldsymbol{\alpha_1}, \boldsymbol{\alpha_2}) \nonumber \\
&= \inf \left\{\inf_{(\boldsymbol{\alpha_1}, \boldsymbol{\alpha_2}) \in \mathcal{\tilde{O}}_1} f(\boldsymbol{\alpha_1}, \boldsymbol{\alpha_2}), \inf_{(\boldsymbol{\alpha_1}, \boldsymbol{\alpha_2}) \in \mathcal{\tilde{O}}_2} f(\boldsymbol{\alpha_1}, \boldsymbol{\alpha_2}) \right\} \nonumber \\
&= \inf_{(\boldsymbol{\alpha_1}, \boldsymbol{\alpha_2}) \in \mathcal{\tilde{O}}_2} f(\boldsymbol{\alpha_1}, \boldsymbol{\alpha_2}) \nonumber
\end{align}
in which the last equality follows since we have
\[S_1(\boldsymbol{\alpha_1}) \geq \frac{S_1(\boldsymbol{\alpha_1}) S_2(\boldsymbol{\alpha_2})} {S_1(\boldsymbol{\alpha_1})+S_2(\boldsymbol{\alpha_2})}\]
for all $(\boldsymbol{\alpha_1}, \boldsymbol{\alpha_2})$, and hence $\mathcal{\tilde{O}}_1 \subseteq \mathcal{\tilde{O}}_2$.

\section{Proof of Theorem \ref{t:DDF_ub}}\label{a:DDF_ub}

We first give an upper bound for the DMT of the MIMO multi-hop half duplex relay channel, and show that the DDF DMT given in Theorem \ref{t:DDF_ach} matches this upper bound. Let $a \in (0,1]$ be the portion of the source transmit time, i.e., the source transmits over the first $aT$ channel uses. Hence, the relay transmits over the remaining $(1-a) T$ channel uses. Here we assume that the time allocation is independent of the message, i.e., it cannot be used for information transmission. As shown in \cite{Yuksel} this does not affect the DMT of the system.

From the two cut-set bounds, the instantaneous capacity $C(\mathbf{H}_1, \mathbf{H}_2)$ is upper bounded by \[ \max_{a, P_{\mathbf{X}_1} P_{\mathbf{X}_2}} \min \{a I(\mathbf{X}_1;\mathbf{Y}_1|\mathbf{H}_1), (1-a) I(\mathbf{X}_2;\mathbf{Y}_2|\mathbf{H}_2)\}.\]
Since the capacity is maximized with Gaussian inputs, and linear scaling of the power constraint does not affect the high SNR analysis, the instantaneous capacity can be bounded as \[C(\mathbf{H}_1, \mathbf{H}_2) \leq \max_{a} \min \{a \bar{C}_1(\mathbf{H}_1), (1-a) \bar{C}_2(\mathbf{H}_2) \},\] where we define \[\bar{C}_i(\mathbf{H}_i) \triangleq \log \det (\mathbf{I} + \mathrm{SNR} \mathbf{H}_i\mathbf{H}_i^\dag).\] We can further upper bound the capacity by assuming optimal time allocation at each channel realization. The instantaneous capacity is maximized at each channel realization for \[a(\mathbf{H}_1, \mathbf{H}_2) = \frac{\bar{C}_2(\mathbf{H}_2)}{\bar{C}_1(\mathbf{H}_1)+\bar{C}_2(\mathbf{H}_2)},\] and the corresponding upper bound is \[C(\mathbf{H}_1, \mathbf{H}_2) \leq \frac{\bar{C}_1(\mathbf{H}_1) \bar{C}_2(\mathbf{H}_2)} {\bar{C}_1(\mathbf{H}_1)+\bar{C}_2(\mathbf{H}_2)}.\]

For a transmission rate of $R=r\log SNR$, the outage probability lower bound is given by
\begin{eqnarray}
P_{out}(r) \geq P\left\{r\log \mathrm{SNR} > \frac{\bar{C}_1(\mathbf{H}_1) \bar{C}_2(\mathbf{H}_2)} {\bar{C}_1(\mathbf{H}_1)+\bar{C}_2(\mathbf{H}_2)} \right\}. \label{e:ubDDF1}
\end{eqnarray}


Using the characterization of the eigenvalues of the channel matrices given in Appendix \ref{a:DDF_ach}, we obtain
\begin{eqnarray}
P_{out}(r) & \dot{\geq} & P\left\{r > \frac{S_1(\boldsymbol{\alpha}_1) S_2(\boldsymbol{\alpha}_2)}{S_1(\boldsymbol{\alpha}_1) + S_2(\boldsymbol{\alpha}_2)} \right\} \nonumber
\end{eqnarray}
in which $S_i(\boldsymbol{\alpha}_i)$ is as defined before. Next, we define \[\mathcal{\bar{O}} \triangleq \left\{ (\boldsymbol{\alpha}, \boldsymbol{\beta}) : r > \frac{S_1(\boldsymbol{\alpha}_1) S_2(\boldsymbol{\alpha}_2)}{S_1(\boldsymbol{\alpha}_1) + S_2(\boldsymbol{\alpha}_2)} \right\}. \] Then the outage probability is lower bounded by
\begin{eqnarray}
P_{out}(r) & \dot{\geq}& \int_{\mathcal{\bar{O}}} p(\boldsymbol{\alpha}, \boldsymbol{\beta}) d\boldsymbol{\alpha} d\boldsymbol{\beta} \nonumber \\
&\doteq& \int_{\mathcal{\bar{O}}'} \prod_{i=1}^2 \prod_{j=1}^{M_i^*} SNR^{-(2j-1+|M_i-M_{i+1}|)\alpha_{i,j}} d\boldsymbol{\alpha_1} d\boldsymbol{\alpha_2} \nonumber
\end{eqnarray}
where \[\mathcal{\bar{O}}' \triangleq \mathcal{\bar{O}} \cap (\mathcal{R}^{M_1^*+}, \mathcal{R}^{M_2^*+}).\]

Similar to Appendix \ref{a:DDF_ach}, we obtain the exponential behavior of the above outage probability using Laplace's method. Note that, since $\mathcal{\bar{O}}' = \mathcal{\tilde{O}}_2$, the outage probability upper bound has the same diversity gain function as the DDF protocol. Hence, DDF is DMT-optimal.

\end{document}